\documentclass[letterpaper, 10 pt, conference]{ieeeconf}  

\IEEEoverridecommandlockouts                              
\usepackage{graphicx} 
\usepackage{amssymb,amsfonts}
\usepackage{amsmath}
\usepackage{algorithmic}
\usepackage{algorithm}
\usepackage{textcomp}
\usepackage{xcolor}
\usepackage{multirow}
\usepackage[T1]{fontenc}
\usepackage[utf8]{inputenc}
\usepackage{textcomp}
\usepackage{booktabs}
\usepackage{multirow}
\usepackage{balance}

\newtheorem{corollary}{Corollary}[section]
\newtheorem{lemma}{Lemma}[section]
\newtheorem{proposition}{Proposition}[section]
\newtheorem{remark}{Remark}

\newtheorem{assumption}{Assumption}

\DeclareMathOperator{\col}{col}
\DeclareMathOperator{\blkdiag}{blkdiag}

\title{\LARGE \bf
Delay-Robust Primal-Dual Dynamics for Distributed Optimization
}

\author{G\"{o}k\c{c}en Devlet \c{S}en, Juan E. Machado, G\"ulay \"Oke G\"unel and Johannes Schiffer
\thanks{This research is supported by the German Federal Government, the Federal Ministry of Research, Technology and Space and the State of Brandenburg within the framework of the
joint project EIZ: Energy Innovation Center (project numbers 85056897 and 03SF0693A) with funds from the Structural Development Act (Strukturst\"{a}rkungsgesetz) for coal-mining regions.}
\thanks{G.D. \c{S}en, J.E. Machado and J. Schiffer are with Brandenburg University of Technology Cottbus-Senftenberg, 03046 Cottbus, Germany
        {\tt\small \{sen, machadom, schiffer\}@b-tu.de}}%
\thanks{G.D. \c{S}en and G.\"O. G\"unel are with Istanbul Technical University, 34469 Istanbul, Turkey
{\tt\small \{sengo, gulay.oke\}@itu.edu.tr}}%
\thanks{J. Schiffer is with Fraunhofer IEG, Fraunhofer Research Institution for Energy Infrastructures and Geotechnologies IEG, 03046 Cottbus, Germany {\tt\small johannes.schiffer@ieg.fraunhofer.de}}%
}

\begin{document}

\maketitle
\thispagestyle{empty}
\pagestyle{empty}


\begin{abstract}
Continuous-time primal-dual gradient dynamics (PDGD) is an ubiquitous approach for dynamically solving constrained distributed optimization problems. Yet, the distributed nature of the dynamics makes it prone to communication uncertainties, especially time delays. To mitigate this effect, we propose a delay-robust continuous-time PDGD. The dynamics is obtained by augmenting the standard PDGD with an auxiliary state coupled through a gain matrix, while preserving the optimal solution. Then, we present sufficient tuning conditions for this gain matrix in the form of linear matrix inequalities, which ensure uniform asymptotic stability in the presence of bounded, time-varying delays. The criterion is derived via the Lyapunov-Krasovskii method. A numerical example illustrates the improved delay robustness of our approach compared to the standard PDGD under large, time-varying delays.
\end{abstract}

\section{Introduction}
Distributed optimization algorithms enable multiple agents to collaboratively solve a global optimization problem using local information and communication with neighboring agents~\cite{ nedic2018distributed,yang2019survey}. 
Each agent holds part of the objective or the constraints, performs local computations, and exchanges limited information with its neighbors. 
The goal is to obtain a globally optimal solution without relying on a centralized coordinator with full problem knowledge. This framework is particularly relevant for large scale and geographically distributed systems, where centralized architectures face limitations in scalability, communication, privacy, and robustness~\cite{ nedic2018distributed,yang2019survey}. Distributed approaches are strongly motivated in applications such as economic dispatch~\cite{liu2019distributed,schiffer2017robustness},  optimal power flow~\cite{dall2013distributed}, resource allocation~\cite{xu2017distributed}.

Distributed optimization algorithms rely on information exchange among agents to coordinate updates and satisfy global constraints. In practice, however, communication imperfections, such as delays, are unavoidable due to the physical separation of agents and network limitations. If such delays are not explicitly considered in the algorithm design, they can degrade performance and lead to undesirable behavior such as oscillations or even instability~\cite{yang2016distributed,wang2018distributed}.

Several studies have analyzed the effect of communication delays on the stability of distributed optimization algorithms. In \cite{yang2016distributed}, a Lyapunov-Krasovskii Functions (LKF) based stability analysis was conducted for a consensus-based algorithm with homogeneous time-varying delays. This approach was extended in \cite{wang2018distributed} to constrained distributed optimizations. In both works, delay-dependent stability conditions are expressed as linear matrix inequalities (LMIs). In \cite{wang2019distributed}, the exponential stability of a distributed resource allocation algorithm is proven under homogeneous delays, with additional restrictions imposed in the case of large delays. In \cite{li2020smooth}, a constrained distributed optimization problem is studied and the stability is established under heterogeneous communication delays using a passivity-based approach.

The above mentioned works analyze stability of distributed optimization algorithms under communication delays for fixed parameters, but do not provide a systematic synthesis that guarantees stability for given delay bounds.
In this paper, we propose an augmented primal-dual gradient dynamics (PDGD) for constrained distributed optimization subject to heterogeneous, time-varying communication delays. PDGD, also known as saddle-point dynamics, is a gradient-based approach for solving convex optimization problems. Early forms of PDGD were introduced in \cite{kose1956solutions} and \cite{arrow1958studies}. Since then, its convergence properties have been extensively investigated. Asymptotic stability results are established in \cite{feijer2010stability,cherukuri2016asymptotic}, while exponential convergence is shown in \cite{qu2018exponential,guo2022exponential}.
However, in the presence of communication delays, the standard PDGD may exhibit oscillations or even instability. 
To address this issue, we augment the PDGD with an auxiliary state coupled through a tunable gain matrix that provides an additional degree of freedom to systematically shape the closed-loop dynamics and improve delay robustness.

Our main contributions are three-fold: (i) a dynamic augmentation of the PDGD with a gain matrix as an additional design parameter that enables a systematic synthesis while preserving the optimal solution;
(ii) a delay-dependent, LMI-based tuning condition that guarantees uniform asymptotic stability for given heterogeneous time-varying delays and accommodates three types of convex local cost functions; 
(iii) a numerical comparison with the standard PDGD. 

\textbf{Notation.}
We define the set $\mathbb{R}_{\geq 0}=\{x\in\mathbb{R}|x\geq0\}$.
For a matrix $A \in \mathbb{R}^{n\times n}$, the notation $A>0$ denotes that $A$ is positive definite. 
The elements below the diagonal of a symmetric matrix are denoted by $\ast$.
The matrices $I_n$ and $0_{n\times m}$ denote the $n\times n$ identity matrix and the $n\times m$ zero matrix, respectively. The indices are omitted when the dimensions are clear from the context. For a finite set $\mathcal N$, $|\mathcal N|$ denotes its cardinality. For a set of positive natural numbers $\mathcal N$, $x:=\col(x_i)_{i \in \mathcal{N}}$ denotes the stacked vector with the entries $x_i\in\mathbb R^{n_i}$. Likewise, $A=\blkdiag(A_i)_{i \in \mathcal{N}}$ denotes a block-diagonal matrix with diagonal blocks $A_i$. For block matrices, $A=[A_{ij}]$ denotes the matrix composed of the block entries $A_{ij}$.   
For a time-delayed signal $x(t-\tau(t))$ with delay $\tau(t)\geq0$, we use the short-hand $x_\tau=x(t-\tau(t))$.

\section{Problem Statement}
Consider a multi-agent system with $N \geq 2$ agents whose objective is to solve the following optimization problem
\begin{subequations}\label{eq:opt_prob}
\begin{align}
    \min_{x\in \mathbb{R}^{n}} f(x)&=\sum_{i=1}^N f_i(x_i), \\
    \text{s.t. }& Ax=b,  \label{eq:coupling_const}
\end{align} 
\end{subequations}
where $x=\col(x_i)_{i \in \mathcal{N}}\in\mathbb{R}^{n}$ is the decision variable of~\eqref{eq:opt_prob} with $x_i \in \mathbb{R}^{n_i}$, $\sum_{i=1}^{N} n_i =n$ and $\mathcal{N}=\{1,\ldots, N\}$. The function $f_i: \mathbb{R}^n\rightarrow \mathbb{R}$ is the local cost of agent $i$, the vector $b\in \mathbb{R}^m$ and the matrix $A\in\mathbb{R}^{m\times n}$ with $m<n$ incorporate the coupling constraints among the agents. Problem \eqref{eq:opt_prob} appears in several applications, including resource allocation, economic dispatch, and consensus-type coordination problems~\cite{nedic2018distributed,yang2019survey}. We make the following standard assumptions on~\eqref{eq:opt_prob}~\cite{qu2018exponential, wang2011control}.
\begin{assumption}\label{eq:cost_convexity}
    The functions $f_i$ in \eqref{eq:opt_prob}, $\forall \, i \in \mathcal{N}$, are twice continuously differentiable, $\mu_i$-strongly convex (with~$\mu_i>0$), and $\ell_i$-smooth, 
    i.e., they satisfy
    \begin{equation*}
        \mu_i \left\|x_i - y_i \right\|^2 \leq \langle \nabla f_i(x_i) -\nabla f_i(y_i), x_i - y_i \rangle \leq \ell_i\left\|x_i - y_i \right\|^2
    \end{equation*}
    for all $x_i, y_i \in \mathbb{R}^{n_i}$. 
\end{assumption}
\begin{assumption}\label{eq:rank_cond}
    The matrix $A$ in \eqref{eq:opt_prob} has full row rank. 
\end{assumption}
The Lagrangian, associated with~\eqref{eq:opt_prob}, is given as follows
\begin{equation*}
    L(x,\lambda) = f(x) + \lambda^\top (Ax-b),
\end{equation*}
where $\lambda \in \mathbb{R}^m$ is the vector of Lagrange multipliers. The corresponding Karush-Kuhn-Tucker (KKT) conditions are given by 
\begin{subequations}\label{eq:KKT_cond}
\begin{align}
    \frac{\partial L({x}^\ast,{\lambda}^\ast)}{\partial x} &= \nabla f({x}^\ast) + A^\top {\lambda}^\ast = 0, \\
     \frac{\partial L({x}^\ast,{\lambda}^\ast)}{\partial \lambda} &= A {x}^\ast - b = 0,
\end{align}
\end{subequations}
and provide necessary and sufficient conditions for optimality~\cite{boyd2004convex}. Assumption~\ref{eq:cost_convexity}-\ref{eq:rank_cond} ensures that the saddle point $({x}^\ast, {\lambda}^\ast)$ exists and that is the unique primal-dual optimizer~\cite{wang2011control, qu2018exponential}. To calculate $({x}^\ast, {\lambda}^\ast)$, one could perform the PDGD~\cite{wang2011control}. 

In a distributed setting, the PDGD is usually implemented such that each agent $i$ updates its local primal variable and 
its local dual variable through communication with neighboring agents, e.g., via a consensus equality constraint~\cite{chang2014distributed, jakovetic2020primal}. 
In this paper, we consider a slightly more general type of constraint by assuming that $A = [A_{ij}]$ in \eqref{eq:opt_prob} is a block matrix composed of $M\geq 1$ block rows and $N\geq2$ block columns with $M \leq N$ and $b=\col(b_1, \ldots, b_M)$ to be a vector partitioned into $M$ blocks, where $b_i \in \mathbb{R}^{m_i}$ with $\sum_{i=1}^M m_i = m$.  
With the $i$-th block row in~\eqref{eq:coupling_const} we associate a dual variable $\lambda_i \in \mathbb{R}^{m_i}$ and denote the index set of agents that maintain dual variables by $\mathcal{N}_\lambda = \{1,\ldots, M\}$.
The considered distributed PDGD
is given by 
\begin{subequations}\label{eq:PD_distributed}
\begin{align}
\dot{\hat{x}}_i &= -\nabla f_i(\hat{x}_i)
- \sum_{j=1}^M
A_{ji}^\top\,\hat{\lambda}_j, \quad \forall \, i \in \mathcal{N}, \\
\dot{\hat{\lambda}}_i &= \sum_{j =1}^N
A_{ij}\,\hat{x}_j
- b_i,
\quad \forall \, i \in \mathcal{N}_\lambda, 
\end{align}
\end{subequations}
where $\hat{x}_i \in \mathbb{R}^{n_i}$ and $\hat{\lambda}_i \in \mathbb{R}^{m_i}$ denote the primal and dual states of the agent~$i$.

Note that the case $M < N$ arises naturally in several applications, including resource allocation~\cite{xu2017distributed}, economic dispatch~\cite{liu2019distributed} and consensus problems with nonredundant constraints \cite[Chapter 9]{FB-LNS}.  
Under Assumptions~\ref{eq:cost_convexity} and~\ref{eq:rank_cond}, for any initial value of $\hat{x}_i$, $\forall \, i\in\mathcal{N}$ and $\hat{\lambda}_i$, $\forall \, i\in\mathcal{N}_\lambda$, the solutions of the PDGD~\eqref{eq:PD_distributed} converge to the unique saddle point $(x^\ast, \lambda^\ast)$ with $x^\ast=\col(x_i^\ast)_{i\in\mathcal{N}}$ and $\lambda^\ast=\col(\lambda_i^\ast)_{i\in\mathcal{N}_\lambda}$~\cite{wang2011control}.

Yet, in any practical setting the information exchange between agents executing the distributed PDGD~\eqref{eq:PD_distributed} will be affected by time delays~\cite{yang2016distributed,li2020smooth}.
When communication delays are present,~\eqref{eq:PD_distributed} becomes
\begin{subequations}\label{eq:PD_delayed}
\begin{align}
\dot{\hat{x}}_i &= -\nabla f_i(\hat{x}_i) 
- A_{ii}^\top \hat{\lambda}_i
- \sum_{\substack{j=1, j \neq i}}^M
A_{ij}^\top \,\hat{\lambda}_{j,\tau_{ij}}, \quad \forall \, i \in \mathcal{N}, \\
\dot{\hat{\lambda}}_i &= A_{ii}\hat{x}_i 
+ \sum_{\substack{j =1,  j \neq i}}^N
A_{ij}\,\hat{x}_{j,\tau_{ij}}
- b_i, 
\quad \forall \, i \in \mathcal{N}_\lambda, 
\end{align}
\end{subequations}
where $\tau_{ij}:\mathbb{R}_{\geq0}\rightarrow[0,h_{ij}]$ denotes the time-varying comunication delay between agents $j$ and $i$ with $h_{ij} > 0$ and we recall the short-hand $x_{\tau_{ij}}=x(t-\tau_{ij}(t))$.
    For slowly varying delays, $\left| \dot{\tau}_{ij} \right| \leq d_{ij}$ with $d_{ij} \in [0, 1)$, while for fast-varying delays no restrictions are imposed on $\dot{\tau}_{ij}$.
    
Communication delays may cause oscillations or instability in the standard PDGD~\eqref{eq:PD_delayed}~\cite{yang2016distributed, wang2019distributed}, motivating the need for improved delay-robust algorithms. We address the problem of enhancing the delay robustness of~\eqref{eq:PD_delayed} by proposing an augmented PDGD in the next section.

\section{An Augmented Primal-Dual Gradient Dynamics for Enhanced Delay-Robustness}
\subsection{Augmented PDGD}
We augment the standard PDGD~\eqref{eq:PD_delayed} as follows
\begin{subequations}\label{eq:proposed_dyn}
\begin{align}
    \dot{\hat{x}}_i = &-\nabla f_i(\hat{x}_i) - A_{ii}^\top\hat{\lambda}_i - \sum_{\substack{j=1, j \neq i}}^M A_{ji}^\top \hat{\lambda}_{j,\tau_{ij}} \nonumber\\
    &+ \kappa_{11,i}(\hat{x}_i-u_{1,i}) + \kappa_{12,i}(\hat{\lambda}_i-u_{2,i}), \ \forall \, i \in \mathcal{N}, \\
     \dot{\hat{\lambda}}_i =& A_{ii}\hat{x}_i + \sum_{\substack{j=1, j \neq i}}^N A_{ij} \hat{x}_{j,\tau_{ij}} - b_i  \nonumber \\
     &+ \kappa_{21,i}(\hat{x}_i-u_{1,i}) + \kappa_{22,i}(\hat{\lambda}_i-u_{2,i}), \ \forall \,i \in \mathcal{N}_\lambda,  \\
     \dot{u}_{1,i} &= \hat{x}_{i} - u_{1,i}, \quad \forall \, i \in \mathcal{N}, \\
     \dot{u}_{2,i} &= \hat{\lambda}_{i} - u_{2,i}, \quad \forall \, i \in \mathcal{N}_\lambda, 
\end{align}
\end{subequations}
where $u_{1,i} \in \mathbb{R}^{n_i}$ and $u_{2,i} \in \mathbb{R}^{m_i}$ denote additional controller states. 
The matrices 
$\kappa_{11,i} \in \mathbb{R}^{n_i \times n_i}$, 
$\kappa_{12,i} \in \mathbb{R}^{n_i \times m_i}$, 
$\kappa_{21,i} \in \mathbb{R}^{m_i \times n_i}$, and 
$\kappa_{22,i} \in \mathbb{R}^{m_i \times m_i}$ 
are the block components of the gain matrix $K_i=[\kappa_{\ell j,i}]$, $j=1,2,$ $\ell=1,2,$ which in the sequel is determined by solving an optimization problem.

\subsection{Gain Synthesis with Guaranteed Delay-Robust Stability} \label{sec:Convergence_analysis}
In this section, we provide delay-dependent conditions that guarantee uniform asymptotic stability of the equilibrium of~\eqref{eq:proposed_dyn}. 
Let $(\bar{x}, \bar{\lambda}, \bar{u}^1, \bar{u}^2)$ with $\bar{x}=\col(\bar{x}_i)_{i \in \mathcal{N}}$, $\bar{\lambda}=\col(\bar{\lambda}_i)_{i \in \mathcal{N}_\lambda}$, $\bar{u}^1=\col(\bar{u}_{1,i})_{i \in \mathcal{N}}$ and $\bar{u}^2=\col(\bar{u}_{2,i})_{i \in \mathcal{N}_\lambda}$ be the equilibrium of~\eqref{eq:proposed_dyn}. We propose the following lemma.
\begin{lemma}
    The point $(\bar{x}, \bar{\lambda}, \bar{u}^1, \bar{u}^2)$ 
    is an equilibrium of~\eqref{eq:proposed_dyn} if and only if the pair $(\bar{x}, \bar{\lambda})$ satisfies the KKT conditions~\eqref{eq:KKT_cond}.
\end{lemma}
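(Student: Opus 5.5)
The plan is a direct computation with the right-hand side of~\eqref{eq:proposed_dyn}, evaluated at a constant state. First, at an equilibrium every state is constant in time. Hence every delayed signal coincides with its current value: $\hat{x}_{j,\tau_{ij}}(t)=\bar{x}_j$ and $\hat{\lambda}_{j,\tau_{ij}}(t)=\bar{\lambda}_j$ for every admissible delay $\tau_{ij}(t)\in[0,h_{ij}]$. Consequently the delays disappear from the equilibrium equations, which therefore reduce to those of the undelayed system. This holds for both slowly and fast-varying delays.

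Next, for the ``only if'' direction, I would use the controller equations. Setting $\dot u_{1,i}=0$ and $\dot u_{2,i}=0$ gives $\bar u_{1,i}=\bar x_i$ for all $i\in\mathcal N$ and $\bar u_{2,i}=\bar\lambda_i$ for all $i\in\mathcal N_\lambda$. Substituting these into the first two equations of~\eqref{eq:proposed_dyn} eliminates every $\kappa$-term, whatever the gain $K_i$. What remains is
\[
0=-\nabla f_i(\bar x_i)-\sum_{j=1}^M A_{ji}^\top\bar\lambda_j,\qquad 0=\sum_{j=1}^N A_{ij}\bar x_j-b_i .
\]
Stacking these over $i\in\mathcal N$ and $i\in\mathcal N_\lambda$, and using the block structure $A=[A_{ij}]$, I obtain $\nabla f(\bar x)+A^\top\bar\lambda=0$ and $A\bar x=b$. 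These are exactly the KKT conditions~\eqref{eq:KKT_cond}. The stacking step is where I would be careful. The $i$-th block column of $A^\top\lambda$ is $\sum_j A_{ji}^\top\lambda_j$, so the primal equation must use the transposed index $A_{ji}^\top$. I read the $A_{ij}^\top$ appearing in~\eqref{eq:PD_delayed} as this same block, consistent with~\eqref{eq:PD_distributed} and~\eqref{eq:proposed_dyn}.

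For the converse, suppose $(\bar x,\bar\lambda)$ satisfies~\eqref{eq:KKT_cond}. I would set $\bar u^1=\bar x$ and $\bar u^2=\bar\lambda$ and check that every right-hand side of~\eqref{eq:proposed_dyn} vanishes at this point:
\begin{itemize}
\item the $u$-equations vanish trivially, since $\bar x_i-\bar u_{1,i}=0$ and $\bar\lambda_i-\bar u_{2,i}=0$;
\item the $\kappa$-terms vanish for the same reason;
\item the remaining terms are the block rows of~\eqref{eq:KKT_cond}, again using that delayed constant signals equal their current values.
\end{itemize}
Combined with the uniqueness of the saddle point under Assumptions~\ref{eq:cost_convexity}--\ref{eq:rank_cond}, this also shows that the equilibrium is unique and equals $(x^\ast,\lambda^\ast,x^\ast,\lambda^\ast)$.

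No real obstacle is expected. The only subtle points are, first, justifying that the delays drop out at a constant solution, and second, the index bookkeeping needed to match the block-row and block-column sums to $A^\top\lambda$ and $Ax$.
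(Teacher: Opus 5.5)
Your proposal is correct and follows the same route as the paper's proof. The $u$-equations force $\bar u^1=\bar x$ and $\bar u^2=\bar\lambda$, which removes every $\kappa$-term and leaves exactly the KKT conditions~\eqref{eq:KKT_cond}; for the converse you choose $\bar u^1=\bar x$ and $\bar u^2=\bar\lambda$. Your remarks that delayed signals equal their current values at a constant solution, and on the $A_{ji}^\top$ indexing, just make explicit details the paper leaves implicit.
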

\begin{proof}
The proof follows directly from the observation that for~\eqref{eq:proposed_dyn} it holds in equilibrium that $\dot{u}^1=0 \iff \bar{u}^1=\bar{x}$ and $\dot{u}^2=0 \iff \bar{u}^2=\bar{\lambda}$.
\end{proof}

Next, we recap the following lemma from~\cite{qu2018exponential}.
\begin{lemma}[Positive definiteness \cite{qu2018exponential}]
    Under Assumption \ref{eq:cost_convexity}, for any $\hat{x}_i \in \mathbb{R}^{n_i}$, there exists a symmetric matrix ${B}_i(\hat{x}_i)$, satisfying $\mu_i I \preceq {B}_i(\hat{x}_i) \preceq \ell_i I$, s.t. $\nabla f_i(\hat{x}_i) - \nabla f_i(\bar{x}_i) = {B}_i(\hat{x}_i)(\hat{x}_i-\bar{x}_i)$, $\forall \, i \in \mathcal{N}$.
    \label{eq:lemma1}
\end{lemma}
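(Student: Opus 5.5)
The plan is to obtain $B_i(\hat{x}_i)$ as an averaged Hessian, using the integral form of the mean value theorem. Fix $i\in\mathcal{N}$ and $\hat{x}_i\in\mathbb{R}^{n_i}$. By Assumption~\ref{eq:cost_convexity}, $f_i$ is twice continuously differentiable, so the map $s\mapsto \nabla f_i(\bar{x}_i + s(\hat{x}_i-\bar{x}_i))$ is continuously differentiable on $[0,1]$. The fundamental theorem of calculus then gives
\begin{equation*}
\nabla f_i(\hat{x}_i)-\nabla f_i(\bar{x}_i)=\Big(\int_0^1 \nabla^2 f_i\big(\bar{x}_i+s(\hat{x}_i-\bar{x}_i)\big)\,ds\Big)(\hat{x}_i-\bar{x}_i).
\end{equation*}
We therefore define $B_i(\hat{x}_i):=\int_0^1 \nabla^2 f_i(\bar{x}_i+s(\hat{x}_i-\bar{x}_i))\,ds$, integrated entrywise. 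The integrand is continuous in $s$, so the integral is well defined. $B_i(\hat{x}_i)$ is symmetric because each Hessian is symmetric, which follows from $C^2$ regularity via Schwarz's theorem. The identity $\nabla f_i(\hat{x}_i)-\nabla f_i(\bar{x}_i)=B_i(\hat{x}_i)(\hat{x}_i-\bar{x}_i)$ then holds by construction.

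It remains to show $\mu_i I\preceq B_i(\hat{x}_i)\preceq \ell_i I$. Since integrating against $v^\top(\cdot)v$ preserves inequalities, it suffices to prove the pointwise bound $\mu_i I\preceq \nabla^2 f_i(z)\preceq \ell_i I$ for every $z\in\mathbb{R}^{n_i}$. This step is the only real content of the proof, because the assumption is stated in terms of gradients rather than Hessians.

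To get the pointwise bound, fix $z$ and a direction $v$, and apply the inequality in Assumption~\ref{eq:cost_convexity} with $x_i=z+tv$ and $y_i=z$ for $t>0$. Dividing by $t^2$ gives
\begin{equation*}
\mu_i\|v\|^2\le \Big\langle \tfrac{\nabla f_i(z+tv)-\nabla f_i(z)}{t},\,v\Big\rangle\le \ell_i\|v\|^2 .
\end{equation*}
Letting $t\to 0^+$, the difference quotient converges to $\nabla^2 f_i(z)v$ by differentiability of $\nabla f_i$. This yields $\mu_i\|v\|^2\le v^\top\nabla^2 f_i(z)v\le \ell_i\|v\|^2$ for all $v$, which is the required Hessian bound.

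Integrating this bound over $s\in[0,1]$ gives $\mu_i\|v\|^2\le v^\top B_i(\hat{x}_i)v\le\ell_i\|v\|^2$, which completes the argument. The case $\hat{x}_i=\bar{x}_i$ needs no special treatment: the identity holds trivially, and the bounds still follow from the Hessian bound at $\bar{x}_i$.
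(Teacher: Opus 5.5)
Your proof is correct and follows the standard argument from the Qu and Li reference, which the paper cites for this lemma instead of proving it: define $B_i(\hat{x}_i)=\int_0^1 \nabla^2 f_i(\bar{x}_i+s(\hat{x}_i-\bar{x}_i))\,ds$ and inherit the bounds $\mu_i I\preceq B_i(\hat{x}_i)\preceq \ell_i I$ from the Hessian. Your difference-quotient step, which derives the pointwise Hessian bounds from the gradient inequality in Assumption~\ref{eq:cost_convexity}, supplies a detail that the source treats as standard.
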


To guarantee stability, we introduce the error variables $\tilde{x}_i=\hat{x}_i-\bar{x}_i$, $\tilde{\lambda}_i=\hat{\lambda}_i-\bar{\lambda}_i$, $\tilde{u}_{1,i}={u}_{1,i}-\bar{u}_{1,i}$ and $\tilde{u}_{2,i}={u}_{2,i}-\bar{u}_{2,i}$
and the matrices
\[
A_i(\hat{x}_i)=
\begin{bmatrix}
- B_i(\hat{x}_i) & -A_{ii}^\top\\
A_{ii} & 0
\end{bmatrix} , \
T_{ij}=
\begin{bmatrix}
0 & -A_{ji}^\top\\
A_{ij} & 0
\end{bmatrix}.
\]
Let $\mathcal{T}_{ij}\in\mathbb{R}^{r\times r}$, $r=m+n$, denote the block matrix whose $(i,j)$-th block equals $T_{ij}$ and whose remaining blocks are zero.
To avoid double indices, introduce the index set
\[
\mathcal E = \{k=(i,j)\in\mathcal N\times\mathcal N : j\neq i,\; T_{ij}\neq0\}.
\]
Let $\rho = |\mathcal E|$
and define
\[
\mathcal T_k=\mathcal T_{i j}, \quad
\tau_k=\tau_{i j},\quad k=1,\ldots,\rho.
\]
Then, using Lemma~\ref{eq:lemma1}, the error dynamics of~\eqref{eq:proposed_dyn} can be written in a compact form as
\begin{subequations}\label{eq:compact_shifted}
\begin{align}
\dot{\tilde z} &= (\mathcal A(\hat{x})+\mathcal K)\tilde z 
+ \sum_{k=1}^{\rho} \mathcal T_k \tilde z_{\tau_k}
- \mathcal K \tilde u, \label{eq:z_dynamics} \\
\dot{\tilde u} &=\tilde z - \tilde u,
\end{align}
\end{subequations}
where $\tilde z=\col(\tilde z_i)_{i \in \mathcal{N}}$ and
$\tilde u=\col(\tilde u_i)_{i \in \mathcal{N}}$ with $\tilde{z}_i = \col(\tilde{x}_i,\tilde{\lambda}_i)$, 
$\tilde{u}_i = \col(\tilde{u}_{1,i},\tilde{u}_{2,i}) \in \mathbb{R}^{r_i\times r_i}$ with $r_i=n_i+m_i$,  and 
$\mathcal K = \blkdiag(K_i)_{i \in \mathcal{N}}$,
$\mathcal A(\hat{x}) = \blkdiag(A_i(\hat{x}_i))_{i \in \mathcal{N}}\in~\mathbb{R}^{r\times r}$.

Next, we provide delay-dependent sufficient conditions for uniform asymptotic stability of the origin of~\eqref{eq:compact_shifted}, while simultaneously enabling the synthesis of the block-diagonal gain matrix $\mathcal{K}$ (see Appendix~\ref{app:proof_1} for the proof).
\begin{proposition}\label{prop:stability_cond}
    Consider the system~\eqref{eq:compact_shifted} with Assumptions~\ref{eq:cost_convexity}-\ref{eq:rank_cond}. Let $h_{k}\geq 0$, $d_{k}\in [0,1)$, $\forall \, k \in \{1,\dots,\rho\}$ and the constant scalar $\varepsilon>0$ be given. Assume that there exist
    $r \times r$ matrices $Y_{11}$, $Y_{12}$, $Y_{22}$, $R_{k}>0$, $Q_{k}>0$, $S_{k}>0$, $S_{12,k}$, $\forall \, k \in \{1,\dots,\rho\}$,
    $P_{2} = \blkdiag(P_{2,i})_{i\in \mathcal{N}}$ with $P_{2,i}\in \mathbb{R}^{r_i\times r_i}$, $P_{2,i}+P_{2,i}^\top>0,$ and $X = \blkdiag(X_{i})_{i\in \mathcal{N}}$ with $X_{i}\in \mathbb{R}^{r_i\times r_i}$,
    such that the following matrix inequalities are feasible
    \begin{subequations}
    \begin{align}
   \hspace{-.235cm} \Phi (\hat{x}) = \begin{bmatrix}
        \phi_{11} & \phi_{12} & \phi_{13} & 0_{r\times \rho r} & 0_{r\times \rho r} \\
         \ast & \phi_{22}(\hat{x}) & \phi_{23}(\hat{x}) & \phi_{24} & \phi_{25} \\ 
         \ast & \ast & \phi_{33} & 0_{r\times \rho r} & \phi_{35} \\
         \ast & \ast & \ast & \phi_{44} & \phi_{45} \\
         \ast & \ast & \ast & \ast & \phi_{55} \\
    \end{bmatrix} <0, \label{eq:phi_matrix}
\end{align}
and
\begin{align}
    \left[\begin{array}{cc}
        R & S_{12} \\
        \ast & R
    \end{array} \right] \geq 0, \quad  \left[\begin{array}{cc}
        Y_{11} & Y_{12} \\
        \ast & Y_{22}
    \end{array} \right] >0, \label{eq:jensen_const}
\end{align}
where 
\begin{equation}\label{eq:phi_expanded}
\begin{aligned}
 R&=\blkdiag(R_{k})_{k\in\{1,\ldots,\rho\}}, \ S=\blkdiag(S_{k})_{k\in\{1,\ldots,\rho\}}, \\ S_{12}& =\blkdiag(S_{12,k})_{k\in\{1,\ldots,\rho\}}, \ {\phi}_{11} = -2 {Y}_{22}, \\ 
 {\phi}_{12} &= -{Y}_{12}^\top +{Y}_{22} -X^\top, \ {\phi}_{13} =  {Y}_{12}^\top -\varepsilon X^\top, \\
    {\phi}_{22}&(\hat{x}) = {Y}_{12} + {Y}_{12}^\top + {P}_{2}^\top\mathcal{A}(\hat{x})+ X +  \mathcal{A}(\hat{x})^\top{P}_{2} \\
    &+X^\top + \sum_{k=1}^\rho\left( {S}_{k} + {Q}_{k} - {R}_{k} \right), \\
    {\phi}_{23}&(\hat{x})= {Y}_{11} -{P}_{2}^\top + \varepsilon\mathcal{A}(\hat{x})^\top{P}_{2}+ \varepsilon X^\top,  \\
    {\phi}_{24} &= \big[\, {S}_{12,1}, \ldots, {S}_{12,\rho} \,\big],\\
    {\phi}_{25} &= \big[\,{R}_{1}-{S}_{12,1}+{P}_2^\top\mathcal{T}_{1}, \ldots, {R}_{\rho}-{S}_{12,\rho}+{P}_2^\top\mathcal{T}_{\rho}\,\big], \\
    {\phi}_{33} &= -\varepsilon{P}_{2}-\varepsilon{P}_{2}^\top +  \sum_{k=1}^\rho h_{k}^2 {R}_{k}, \\
     {\phi}_{35} &= \big[\, \varepsilon{P}_2^\top \mathcal{T}_{1}, \ldots, \varepsilon{P}_2^\top \mathcal{T}_{\rho} \,\big], \\
     {\phi}_{44} &= -S-R, \quad {\phi}_{45} = R -  S_{12}^\top, \\
     {\phi}_{55} &= -2{R} +{S}_{12} +{S}_{12}^\top -\blkdiag((1-d_{k}){Q}_{k})_{ k \in \{1,\dots,\rho\}}.
\end{aligned}
\notag 
\end{equation} 
\end{subequations}
Choose $\mathcal{K}=(P_2^\top)^{-1} X$.
Then, the origin is uniformly asymptotically stable for all delays $\tau_{k}(t)\in [0, h_{k}]$ with $\left|\dot{\tau}_{k}(t)\right|\leq d_{k}$, $\forall k \in \{1,\dots,\rho\}$. If the above conditions are feasible for $d_k=1,$ respectively $Q_{k}=0_{r\times r}$,  then the origin is uniformly asymptotically stable for all fast-varying delays $\tau_{k}(t)\in [0, h_{k}]$. 
\end{proposition}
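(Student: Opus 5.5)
We will build a Lyapunov--Krasovskii functional (LKF) for the augmented error system~\eqref{eq:compact_shifted} and use the descriptor method (Fridman) to handle the nonlinear, state-dependent matrix $\mathcal A(\hat x)$ and the delayed couplings. The state is $\xi=\col(\tilde u,\tilde z)$. The candidate functional is
\[
V = \begin{bmatrix}\tilde u\\ \tilde z\end{bmatrix}^{\!\top}\begin{bmatrix}Y_{11}&Y_{12}\\ \ast&Y_{22}\end{bmatrix}\begin{bmatrix}\tilde u\\ \tilde z\end{bmatrix} + \sum_{k=1}^\rho\Big(\int_{t-h_k}^t \tilde z^\top(s)S_k\tilde z(s)\,ds + \int_{t-\tau_k}^t \tilde z^\top(s)Q_k\tilde z(s)\,ds + h_k\int_{-h_k}^0\!\int_{t+\theta}^t \dot{\tilde z}^\top(s)R_k\dot{\tilde z}(s)\,ds\,d\theta\Big).
\]
The exact pairing of the $Y$ blocks with $\tilde u$ and $\tilde z$ will be fixed by matching the entries $\phi_{11},\phi_{12},\phi_{13}$, which indicate that $Y_{22}$ multiplies the $\tilde u$-type component whose derivative is $\tilde z-\tilde u$. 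Positivity of $V$ follows from~\eqref{eq:jensen_const} together with $S_k,Q_k,R_k>0$. It remains to show $\dot V\le -c\|\tilde z\|^2$ along solutions for every admissible delay. Uniform asymptotic stability then follows from the Lyapunov--Krasovskii theorem, using that $\dot{\tilde u}=\tilde z-\tilde u$ is an exponentially stable filter driven by $\tilde z$, or alternatively the strict negativity in $\tilde u$ coming from $\phi_{11}=-2Y_{22}$.

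\textbf{Step 1 (descriptor terms).} To the derivative we add the zero expression
\[
2\big[\tilde z^\top P_2^\top+\varepsilon\dot{\tilde z}^\top P_2^\top\big]\Big[(\mathcal A(\hat x)+\mathcal K)\tilde z+\sum_k\mathcal T_k\tilde z_{\tau_k}-\mathcal K\tilde u-\dot{\tilde z}\Big].
\]
Substituting $\mathcal K=(P_2^\top)^{-1}X$ gives $P_2^\top\mathcal K=X$. This linearizes the product of the unknowns $P_2$ and $\mathcal K$ and produces the $X$, $X^\top$, $\varepsilon X^\top$ terms and the terms $-P_2^\top$, $-\varepsilon(P_2+P_2^\top)$ in $\phi_{22},\phi_{23},\phi_{33},\phi_{12},\phi_{13}$. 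Invertibility of $P_2$ follows from $P_{2,i}+P_{2,i}^\top>0$, or equivalently from $\phi_{33}<0$. The block-diagonal structure imposed on $P_2$ and $X$ makes $\mathcal K$ block diagonal, so the controller is distributed.

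\textbf{Step 2 (integral terms).} Differentiating the $S_k$ and $Q_k$ integrals gives $\tilde z^\top(S_k+Q_k)\tilde z-\tilde z_{h_k}^\top S_k\tilde z_{h_k}-(1-\dot\tau_k)\tilde z_{\tau_k}^\top Q_k\tilde z_{\tau_k}$, and we bound $-(1-\dot\tau_k)\le -(1-d_k)$. The $R_k$ term gives $h_k^2\dot{\tilde z}^\top R_k\dot{\tilde z}-h_k\int_{t-h_k}^t\dot{\tilde z}^\top R_k\dot{\tilde z}$. We split the last integral at $t-\tau_k$ and apply Jensen's inequality with the reciprocally convex combination lemma of Park et al. This is valid under the condition $\begin{bmatrix}R&S_{12}\\ \ast&R\end{bmatrix}\ge0$ and bounds the integral term by a quadratic form in $(\tilde z,\tilde z_{\tau_k},\tilde z_{h_k})$. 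It produces the $-R_k$ entry in $\phi_{22}$, the coupling blocks $\phi_{24},\phi_{25}$ and $\phi_{45}$, and the diagonal blocks $\phi_{44}=-S-R$ and $\phi_{55}=-2R+S_{12}+S_{12}^\top-(1-d_k)Q_k$.

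\textbf{Step 3 (collecting terms).} With $\eta=\col(\tilde u,\tilde z,\dot{\tilde z},\col(\tilde z_{h_k}),\col(\tilde z_{\tau_k}))$ we obtain $\dot V\le\eta^\top\Phi(\hat x)\eta$. The main obstacle is that $\Phi$ depends on $\hat x$ through $B_i(\hat x_i)$ from Lemma~\ref{eq:lemma1}. Since $\Phi$ is affine in $\mathcal A(\hat x)$, and hence in each $B_i$, and $B_i$ ranges over the convex set $\{\mu_iI\preceq B_i\preceq\ell_iI\}$, we read~\eqref{eq:phi_matrix} as holding for all such $B_i$. In implementation this will be checked by a convexity (vertex) argument over the extreme values, for example at $B_i=\mu_iI$ and $B_i=\ell_iI$ when $B_i$ is scalar-type, or by an S-procedure otherwise. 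Strictness then gives a uniform $c>0$ with $\dot V\le -c\|\eta\|^2$.

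\textbf{Step 4 (fast-varying case).} If $d_k=1$ or $Q_k=0$, the $Q_k$ terms vanish or become nonpositive, so the derivative of $\dot\tau_k$ never enters. The same estimate then holds for arbitrary measurable delays $\tau_k\in[0,h_k]$, which yields the fast-varying statement.
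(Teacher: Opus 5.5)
Your argument for the slowly varying case is the paper's proof. You use the same functional, with $V_1$ on $\col(\tilde z,\tilde u)$ so that $Y_{22}$ weights $\tilde u$, as you note. You use the same descriptor term with $P_3=\varepsilon P_2$ and $X=P_2^\top\mathcal K$, and Jensen's inequality with the reciprocally convex lemma (Lemma~3.4 in \cite{fridman2014introduction}) under the first condition in \eqref{eq:jensen_const}. The vector $\eta$ is also the same, giving $\dot V\le\eta^\top\Phi(\hat x)\eta$.

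\textbf{Gap in Step~4.} The fast-varying argument fails as written. Suppose $d_k=1$ and $V_{3,k}$ is kept with $Q_k>0$. Then $\dot V_{3,k}=\tilde z^\top Q_k\tilde z-(1-\dot\tau_k)\tilde z_{\tau_k}^\top Q_k\tilde z_{\tau_k}$ still contains $\dot\tau_k$. For a fast-varying delay, $\dot\tau_k$ may exceed $1$ or not exist. This term can then be positive or undefined, and it is not dominated by the zero block that $d_k=1$ leaves in $\phi_{55}$. So the claim that the delay derivative ``never enters'' is false on the $d_k=1$ branch.

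The missing step is as follows. For $d_k=1$, the matrix $Q_k$ enters $\Phi(\hat x)$ only through the summand $+Q_k\ge0$ in $\phi_{22}(\hat x)$. Hence $\Phi(\hat x)<0$ still holds after replacing $Q_k$ by $0$. You then delete $V_{3,k}$ from the functional. What remains ($V_1$, $V_{2,k}$, $V_{4,k}$ and the Jensen/reciprocal-convexity bound) needs only $\tau_k(t)\in[0,h_k]$. The paper's appendix does not write this out either, but it is exactly what ``feasible for $d_k=1$, respectively $Q_k=0$'' relies on.

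\textbf{Smaller point on Step~3.} You replace the hypothesis $\Phi(\hat x)<0$ by negativity for every $B_i$ with $\mu_iI\preceq B_i\preceq\ell_iI$. This is stronger than what the proposition assumes, and it is not needed. Take the continuous choice $B_i(\hat x_i)=\int_0^1\nabla^2 f_i(\bar x_i+s\tilde x_i)\,ds$. Then $\dot V\le0$ keeps $\tilde x(t)$ in a compact set determined by $V$ at the initial time. On that set, continuity and pointwise negativity give $\Phi(\hat x)\le-cI$ for some $c>0$. This supplies the uniform decay rate that the paper simply asserts.
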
    

\subsection{State-Independent Reformulation}
Due to the presence of $\mathcal{A}(\hat{x})$ in~\eqref{eq:phi_matrix}, the matrix $\Phi(\hat{x})$ is state-dependent, which depending on the structure of $\Phi(\hat{x})$ might result in an expensive numerical implementation. Therefore, we derive next a state-independent LMI reformulation of Proposition~\ref{prop:stability_cond}.
The state dependency of $\mathcal{A}(\hat{x})$ (or the local matrices $A_i(\hat{x}_i)$) arises from the Hessian of the local cost functions $f_i$, i.e., $B_i(\hat{x}_i)$ in Lemma~\ref{eq:lemma1}. However, in many practical cases, \eqref{eq:phi_matrix} admits an equivalent or sufficient LMI reformulation. We distinguish the following three cases.

\textbf{Case 1:} The function $f_i$ is quadratic. Then, the corresponding Hessian matrix $B_i(\hat{x}_i)$ is constant. Hence, $A_i(\hat{x}_i)$ becomes state independent, i.e., $A_i(\hat{x}_i)=A_i$. We collect all indices $i$, such that $f_i$ is quadratic in the set $\mathcal{N}_1\subset \mathcal{N}$.

\textbf{Case 2:} 
The function $f_i$ is not quadratic, but its Hessian matrix $B_i(\hat{x}_i)$ admits an exact polytopic representation. Then $A_i(\hat{x}_i)$ can be written as a convex combination of state-independent matrices, i.e.,
$A_i(\hat{x}_i) = \sum_{j_i=1}^{q_i} \delta_{j_i,i}(\hat{x}_i)\Delta_{j_i,i}$,
$\delta_{j,i}(\hat{x}_i) \geq 0$,
$\sum_{j_i=1}^{q_i}\delta_{j_i,i}(\hat{x}_i) = 1$,
where $\delta_{j_i,i}(\hat{x}_i)$ are state-dependent convex combination coefficients, $\Delta_{j_i,i}$ are state-independent vertex matrices and $q_i>1$ is the number of vertices.  
We collect the corresponding indices $i$ in $\mathcal{N}_2\subset \mathcal{N}$.

\textbf{Case 3:} The function $f_i$ is not quadratic and no exact polytopic representation of its Hessian matrix $B_i(\hat{x}_i)$ is available. Then we decompose $A_i(\hat{x}_i)$ into a constant and a state-dependent part, i.e., $A_i(\hat{x}_i) = A_{i}^0 + A_i^x(\hat{x}_i)$, with
\begin{align*} 
    A^0_{i} =
    \begin{bmatrix}
        0 & -A_{ii}^\top\\
        A_{ii} & 0
    \end{bmatrix}, 
    \quad 
   A_i^x(\hat{x}_i) =
    \begin{bmatrix}
        - B_i(x_i) & 0\\
        0 & 0
    \end{bmatrix}.
\end{align*}
The indices are collected in
$\mathcal{N}_3=\mathcal{N}\setminus(\mathcal{N}_1\cup\mathcal{N}_2)$.

Let the global vertex index set be
$\mathcal J := \prod_{i\in\mathcal N_2}\{1,\dots,q_i\}$, where each $j \in \mathcal J$ is the tuple $j = (j_i)_{i\in\mathcal N_2}, \ \forall \, j_i \in \{1,\dots,q_i\}$. Define  
$\varsigma := \sum_{i\in\mathcal{N}_3} m_i$, 
$\sigma := \sum_{i\in\mathcal{N}_3} n_i$. Next, we present state independent stability conditions yielding a unified LMI-based formulation that accommodates all three cases. The proof is given in Appendix~\ref{app:proof_corollary}.

\begin{corollary}\label{corollary:LMI_formulation}
Consider the system~\eqref{eq:compact_shifted} with Assumptions~\ref{eq:cost_convexity}-\ref{eq:rank_cond}. 
Let $h_k\ge0$, $d_k\in[0,1)$, $\forall \, k\in\{1,\dots,\rho\}$ and the constant scalar $\varepsilon>0$ be given.
Assume there exist $r \times r$ matrices $Y_{11},Y_{12},Y_{22}$, 
$R_k>0$, $Q_k>0$, $S_k>0$, $S_{12,k}$ for all $k\in\{1,\dots,\rho\}$, 
and block-diagonal matrices $X=\blkdiag(X_i)_{i\in\mathcal N}$ with $X_i\in\mathbb{R}^{r_i\times r_i}$,
$P_2=\blkdiag(P_{2,i})_{i\in\mathcal N}$, where $P_{2,i}\in\mathbb{R}^{r_i\times r_i}$ with $P_{2,i}+P_{2,i}^\top>0$ $\forall \, i \in \mathcal N$ and 
\[
P_{2,i}=
\begin{bmatrix}
w_{11,i}I_{n_i} & W_{12,i}\\
W_{21,i} & W_{22,i}
\end{bmatrix}
\quad 
\forall \, i \in \mathcal N_3,
\]
as well as
$\Omega_1 = \blkdiag(\Omega_{1i})_{i\in\mathcal{N}_3} > 0$, 
$\Omega_2 = \blkdiag(\Omega_{2i})_{i\in\mathcal{N}_3} > 0$,
$\Omega_3 = \blkdiag(\Omega_{3i})_{i\in\mathcal{N}_3} > 0$,
with \(\Omega_{1i}, \Omega_{3i} \in \mathbb{R}^{m_i\times m_i}\) and \(\Omega_{2i} \in \mathbb{R}^{n_i\times n_i}\),
such that~\eqref{eq:jensen_const} and the following matrix inequalities are feasible  
\begin{align}\label{eq:Xi_vertex}
\Xi^{(j)} =
\begin{bmatrix}
\tilde{\Phi}^{(j)} & \Theta \\
\ast & \mathrm{diag}(\Omega_1,\Omega_2,\Omega_3)
\end{bmatrix} < 0
\qquad
\forall \, j \in \mathcal{J},
\end{align}
where 
\begin{align*}
      \tilde{\Phi}^{(j)} =&
\begin{bmatrix}
\phi_{11} & \phi_{12} & \phi_{13} & 0_{r\times \rho r} & 0_{r\times \rho r} \\
\ast & \tilde{\phi}_{22}^{(j)} & \tilde{\phi}_{23}^{(j)} & \phi_{24} & \phi_{25} \\
\ast & \ast & \tilde{\phi}_{33} & 0_{r\times \rho r} & \phi_{35} \\
\ast & \ast & \ast & \phi_{44} & \phi_{45} \\
\ast & \ast & \ast & \ast & \phi_{55}
\end{bmatrix},\\ 
\tilde{\phi}_{22}^{(j)} =& {Y}_{12} + {Y}_{12}^\top + \Psi_1 + {P}_{2}^\top\tilde{\mathcal{A}}^{(j)}+ X  \\
&+ (\tilde{\mathcal{A}}^{(j)})^\top{P}_{2} +X^\top + \sum_{k=1}^{\rho}\left( {S}_{k} + {Q}_{k} - {R}_{k} \right),       \\
    \tilde{\phi}_{23}^{(j)} =& {Y}_{11}-{P}_{2}^\top + \varepsilon(\tilde{\mathcal{A}}^{(j)})^\top{P}_{2}+ \varepsilon X^\top, \\
    \tilde{\phi}_{33} =& -\varepsilon{P}_{2}-\varepsilon{P}_{2}^\top +  \Psi_2 +\sum_{k=1}^{\rho} h_{k}^2 {R}_{k}, \\
    \tilde{\mathcal A}^{(j)}
=&\blkdiag(\tilde{A}_i)_{i\in\mathcal{N}}, \
\tilde{A}_i =
\begin{cases}
A_i, & i \in \mathcal N_1,\\
\Delta_{j_i,i}, & i \in \mathcal N_2,\\
A^0_{i}, & i \in \mathcal N_3,
\end{cases} \notag
\end{align*}
\begin{align}  
 \Psi_1=&\blkdiag(\Psi_{1i})_{i\in\mathcal{N}}, 
\
\Psi_2=\blkdiag(\Psi_{2i})_{i\in\mathcal{N}}, \notag\\
\Psi_{1i}=&
\begin{cases}
\begin{bmatrix}
-2\mu_i w_{11,i}I_{n_i} & 0\\
0 & \Omega_{1i}
\end{bmatrix}, & i\in\mathcal{N}_3,\\
0_{r_i\times r_i}, & i\notin\mathcal{N}_3,
\end{cases} \notag \\
\Psi_{2i}=&
\begin{cases}
\begin{bmatrix}
\Omega_{2i} & 0\\
0 & \Omega_{3i}
\end{bmatrix}, & i\in\mathcal{N}_3,\\
0_{r_i\times r_i}, & i\notin\mathcal{N}_3,
\end{cases} \notag\\
    \Theta =&
\begin{bmatrix}
0_{r \times \varsigma} & 0_{r \times \sigma} & 0_{r \times \varsigma} \\
\varphi_{26} & \varphi_{27} & \varphi_{28} \\
0_{3r \times \varsigma} & 0_{3r \times \sigma} & 0_{3r \times \varsigma} 
\end{bmatrix},  \notag\\
    {\varphi}_{26} =& \blkdiag(\begin{bmatrix}
        \ell_iW_{12,i}^\top & 0_{m_i\times m_i}\end{bmatrix}^\top), \ \forall \, i\in\mathcal{N}_3,\notag\\
    {\varphi}_{27} =& \blkdiag(\begin{bmatrix}
        \varepsilon\ell_iw_{11,i} I_{n_i} & 0_{n_i\times m_i}\end{bmatrix}^\top), \  \forall \,i\in\mathcal{N}_3,\notag\\
    {\varphi}_{28} =& \blkdiag(\begin{bmatrix}
         \varepsilon\ell_iW_{12,i}^\top & 0_{m_i\times m_i}\end{bmatrix}^\top), \ \forall \, i \in\mathcal{N}_3. \notag
\end{align} 
Choose $\mathcal{K}=(P_2^\top)^{-1} X$. Then the origin is uniformly asymptotically stable for all delays $\tau_{k}(t)\in [0, h_{k}]$ with $\left|\dot{\tau}_{k}(t)\right|\leq d_{k}$, $\forall \, k \in \{1,\dots,\rho\}$. Moreover, if the above conditions are feasible for $d_k=1$, respectively $Q_{k}=0_{r\times r}$,  then the origin is uniformly asymptotically stable for all fast varying delays $\tau_{k}(t)\in [0, h_{k}]$, $\forall \, k \in \{1,\dots,\rho\}$. 

\end{corollary}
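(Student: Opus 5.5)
The plan is to reduce the corollary to Proposition~\ref{prop:stability_cond}. We show that feasibility of \eqref{eq:Xi_vertex} for all $j\in\mathcal J$ implies $\Phi(\hat x)<0$ for every $\hat x$. The remaining constraints \eqref{eq:jensen_const}, the structural requirements on $P_2$ and $X$, and the choice $\mathcal K=(P_2^\top)^{-1}X$ are identical in both statements. Uniform asymptotic stability, for both slowly and fast-varying delays, then follows verbatim from the proposition.

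\textbf{Step 1: Case 2 via convexity.} In $\Phi(\hat x)$, the state enters only through $P_2^\top\mathcal A(\hat x)$ and its transpose in $\phi_{22}$ and $\phi_{23}$, and it enters affinely. For $i\in\mathcal N_2$ we write $A_i(\hat x_i)=\sum_{j_i}\delta_{j_i,i}\Delta_{j_i,i}$. Taking products of the coefficients, $\prod_{i\in\mathcal N_2}\delta_{j_i,i}(\hat x_i)$, gives nonnegative weights over $j\in\mathcal J$ that sum to one. Since $P_2$ is block diagonal, the map $\mathcal A\mapsto P_2^\top\mathcal A$ is block-wise linear, so $\Phi(\hat x)$ is the corresponding convex combination of the vertex matrices. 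Here the Case~3 blocks are kept state dependent. Hence it suffices to prove negativity at each vertex $j$ with the $\mathcal N_3$ blocks still depending on $\hat x$.

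\textbf{Step 2: Case 3 via bounding the uncertain part.} For $i\in\mathcal N_3$ we write $A_i=A_i^0+A_i^x$. The extra terms are $P_2^\top \mathcal A^x+(\mathcal A^x)^\top P_2$ in $\phi_{22}$ and $\varepsilon(\mathcal A^x)^\top P_2$ in $\phi_{23}$. With the imposed structure $P_{2,i}=\begin{bmatrix} w_{11,i}I & W_{12,i}\\ W_{21,i}&W_{22,i}\end{bmatrix}$, we have $(A_i^x)^\top P_{2,i}=\begin{bmatrix}-B_iw_{11,i} & -B_iW_{12,i}\\0&0\end{bmatrix}$.

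Its $(1,1)$ block in the symmetric sum is $-2w_{11,i}B_i\preceq -2\mu_iw_{11,i}I$. This requires $w_{11,i}>0$, which follows from $P_{2,i}+P_{2,i}^\top>0$. The bound accounts for the first block of $\Psi_{1i}$.

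The cross terms $-B_iW_{12,i}$ in $\phi_{22}$, and $-\varepsilon B_iw_{11,i}$ and $-\varepsilon B_iW_{12,i}$ in $\phi_{23}$, are bounded by Young's inequality $M^\top N+N^\top M\le M^\top\Omega M+N^\top\Omega^{-1}N$. The weights $\Omega_{1i},\Omega_{2i},\Omega_{3i}$ are chosen so that $\|B_i\|\le\ell_i$ turns each product into a term like $\ell_i^2 W^\top\Omega^{-1}W$ plus an $\Omega$ term placed on the $\lambda$-rows of $\phi_{22}$ or on $\phi_{33}$. These are exactly the $\Psi_1$ and $\Psi_2$ entries and the columns $\varphi_{26},\varphi_{27},\varphi_{28}$ of $\Theta$.

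This yields $\Phi(\hat x)\le \tilde\Phi^{(j)}+\Theta\,\mathrm{diag}(\Omega_1,\Omega_2,\Omega_3)^{-1}\Theta^\top$ at each vertex. A Schur complement then converts this into $\Xi^{(j)}<0$; the sign convention of the lower-right block must be matched here, and with a negative-definite diagonal block the Schur step goes through. Case~1 blocks need nothing.

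\textbf{Main obstacle.} The hard part is the bookkeeping in Step~2. We must choose the factorisation in each Young inequality so that the uncertain $B_i$ appears only through $B_i^\top B_i\preceq\ell_i^2 I$, giving the $\ell_i$ factors in $\Theta$. We must also check which rows and columns each $\Omega$ term lands on, so that it matches $\Psi_1$, $\Psi_2$ and the placement of $\varphi_{2\cdot}$ in the second block row. Finally, the product-weight argument in Step~1 must be combined with Step~2 without conflict. This works because the Case~3 bound is uniform in $\hat x$ and independent of $j$.
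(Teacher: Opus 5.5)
Your route is the paper's. The product weights $\prod_{i\in\mathcal N_2}\delta_{j_i,i}(\hat x_i)$ reduce $\mathcal N_2$ to the vertices, Lemma~\ref{eq:lemma1} and Young's inequality absorb the $\mathcal N_3$ terms, a Schur complement gives $\Xi^{(j)}$, and Proposition~\ref{prop:stability_cond} finishes. You are also right about the corner block: with $\Omega_1,\Omega_2,\Omega_3>0$ as printed, \eqref{eq:Xi_vertex} is infeasible whenever $\mathcal N_3\neq\emptyset$, so it must read $-\mathrm{diag}(\Omega_1,\Omega_2,\Omega_3)$.

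\textbf{The gap.} It sits in the step you call bookkeeping. Once the weights sit where $\Psi_{1i}$ and $\Psi_{2i}$ put them ($\Omega_{1i}$ on $\tilde\lambda_i$, $\Omega_{2i}$ on $\dot{\tilde x}_i$, $\Omega_{3i}$ on $\dot{\tilde\lambda}_i$), no freedom is left in the factorisation. Minimising over $\tilde\lambda_i,\dot{\tilde x}_i,\dot{\tilde\lambda}_i$ shows that the tightest remainders Young's inequality can leave are $\tilde x_i^\top B_iMB_i\tilde x_i$, with $M$ equal to $W_{12,i}\Omega_{1i}^{-1}W_{12,i}^\top$, $\varepsilon^2w_{11,i}^2\Omega_{2i}^{-1}$ and $\varepsilon^2W_{12,i}\Omega_{3i}^{-1}W_{12,i}^\top$ respectively. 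To reach $\varphi_{26},\varphi_{27},\varphi_{28}$ you need $B_iMB_i\preceq\ell_i^2M$. The bound $\mu_iI\preceq B_i\preceq\ell_iI$ does not give this unless $B_i$ commutes with $M$; $\|B_i\|\le\ell_i$ only yields $B_iMB_i\preceq\ell_i^2\|M\|I$. For example, $B_i=\begin{bmatrix}2&1\\1&2\end{bmatrix}$ ($\mu_i=1$, $\ell_i=3$) and $M=e_1e_1^\top$ give $e_2^\top B_iMB_ie_2=1>0=9\,e_2^\top Me_2$.

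The slack $-2w_{11,i}\tilde x_i^\top(B_i-\mu_iI)\tilde x_i$ from the $(1,1)$ bound does not rescue this. Take $n_i=2$, $m_i=1$, $W_{12,i}=e_1$, $w_{11,i}=\varepsilon=\Omega_{3i}=1$, $\Omega_{2i}=I$. Then the required bound at $\tilde x_i=e_2$ reduces to $5-\Omega_{1i}^{-1}\ge 0$, which fails once $\Omega_{1i}<1/5$. So the goal you set yourself, making $B_i$ enter only through $B_i^\top B_i$, cannot be met with this $\Theta$. The statement forces $w_{11,i}I_{n_i}$ for exactly this commutation reason in the $(1,1)$ term, but nothing analogous is imposed on the other three terms.

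\textbf{Relation to the paper.} The paper's bound \eqref{eq:youngs_ineq} makes the same replacement of $B_i$ by $\ell_i$ without justification. Your plan is therefore exactly as complete as the paper's proof. Both are sound when $n_i=1$ for every $i\in\mathcal N_3$, since then $B_i$ is a scalar in $[\mu_i,\ell_i]$ and everything commutes.

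For $n_i\ge2$ the argument needs a different certificate:
\begin{itemize}
\item Take $\Omega_{2i}=\omega_{2i}I_{n_i}$, so that $B_i\Omega_{2i}^{-1}B_i\preceq\ell_i^2\omega_{2i}^{-1}I$.
\item For the $W_{12,i}$ cross terms, put a scalar weight on the $n_i$-dimensional side, e.g. $-2(W_{12,i}\tilde\lambda_i)^\top B_i\tilde x_i\le\omega\|W_{12,i}\tilde\lambda_i\|^2+\omega^{-1}\ell_i^2\|\tilde x_i\|^2$, and similarly for the $\dot{\tilde\lambda}_i$ term.
\end{itemize}
This leads to a $\Theta$ structured differently from the one in the statement.
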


\begin{remark}
Corollary~\ref{corollary:LMI_formulation} provides a unified, state independent stability condition for the set 
$\mathcal N=\mathcal N_1\cup\mathcal N_2\cup\mathcal N_3$, thereby accommodating heterogeneous cost functions and time-varying delays. 
In contrast to Proposition~\ref{prop:stability_cond}, the resulting conditions are state independent but rely on approximations, and thus tend to be more conservative. 
In particular, no approximation is involved for $\mathcal N_1$, the polytopic representation in $\mathcal N_2$ introduces moderate conservatism, while the Young's inequality-based treatment in $\mathcal N_3$ is generally the most conservative. 
If some subsets $\mathcal N_i$ are empty, the corresponding constraints are removed and the optimization problem reduces accordingly.  
\end{remark}

\begin{remark}\label{remark:minimizing_gains}
    Corollary~\ref{corollary:LMI_formulation} provides sufficient conditions on $\mathcal{K}$ ensuring delay-robust stability without imposing any restrictions on the entries of $\mathcal{K}$. By introducing positive parameters $\alpha_1$, $\alpha_2$ and decision variables $\kappa_X$, $\kappa_P$, one can solve the following optimization problem to minimize the the entries of $\mathcal{K}$~\cite{boyd1994linear}
    \begin{align}
        &\min_{\kappa_{X}, \kappa_{P}} \alpha_1 \kappa_X + \alpha_2 \kappa_P \notag\\ 
        &s.t. \ \eqref{eq:Xi_vertex}, \eqref{eq:jensen_const}, \notag \\ 
        &\begin{bmatrix}
        0.5(P_2+P_2^\top) 
        & I_{r} \\
        \ast & \kappa_P I_{r}
    \end{bmatrix} > 0, \  \begin{bmatrix}
        -\kappa_X I_{r} & X^\top \\
        \ast & -I_{r}
    \end{bmatrix} <0. \label{eq:min_gain}
    \end{align}
    
\end{remark}
\begin{remark}\label{remark:max_delay}
    If  $h_k=\bar{h}$ for $k\in\{1, \ldots, \rho\}$ where $\bar{h}$ is a unified delay upper bound, by solving the following optimization problem, 
    \[ \begin{aligned}
        &\min_{\bar{h}}{-\bar{h}} \quad s.t. \ \eqref{eq:Xi_vertex}, \eqref{eq:jensen_const},
    \end{aligned}
    \]
    we find the maximum allowable delay upper bound (MADUB) and the corresponding gain matrix.
    Although the constraints~\eqref{eq:Xi_vertex} are bilinear matrix 
    inequalities (BMIs), they reduce to LMIs for fixed $\bar{h}$, hence, the problem is quasiconvex in $\bar{h}$ and can be solved numerically via bisection~\cite{boyd2004convex}. 
\end{remark}

\section{Numerical Example}
Consider the optimization problem~\eqref{eq:opt_prob} with the following local cost functions borrowed from~\cite{yang2016distributed}:

\[
\begin{aligned}
f_1(x_1)&=0.9(x_1+1)^2, \ f_2(x_2)=(x_2-4)^2, \\
f_3(x_3)&=0.5x_3^2-1, \ f_4(x_4)=0.6x_4^2+x_4, \\
f_5(x_5)&=(x_5+2)^2,\  f_6(x_6)=0.8x_6^2+2,\\
f_7(x_{7})&=(x_{7}-10)^2,\ 
f_8(x_{8})\!=\! \ln(e^{-0.1x_{8}}\!+\!e^{0.3x_{8}})\!+\!0.9x_{8}^2,\\
f_9(x_{9})&=\sin{\tfrac{x_{9}}{2}}+\tfrac{x_{9}^2}{2}, \ f_{10}(x_{10})=\tfrac{x_{10}^2}{\sqrt{x_{10}^2+9}}+0.6x_{10}^2.
\end{aligned}
\]

The matrix $A$ and the vector $b$ defining the coupling constraints 
are given by
\[
A=\begin{bmatrix}
    I_9 & 0_{9\times 1}
\end{bmatrix}-\begin{bmatrix}
   0_{9\times 1} & I_9 
\end{bmatrix},
\qquad
b = 0_{9\times 1}.
\]

All local cost functions satisfy Assumption~\ref{eq:cost_convexity}. 
Indeed, for $i\in \mathcal{N}_1= \{1,\ldots,7\}$ the functions are quadratic, hence their Hessians are constant. For $i\in \mathcal{N}_2:= \{8,9,10\}$, the Hessians of $f_i$ are state-dependent. However, they admit a polytopic representation. 
Since the matrix $A_i(\hat{x}_i)$ for each $i \in \mathcal{N}_2$, is expressed as a convex combination of two vertices, i.e., $q_i=2$, the matrix $\tilde{\mathcal{A}}^{(j)}$ is described by eight vertices.

To solve the problem, we implement both the standard PDGD~\eqref{eq:PD_delayed} 
and the proposed dynamics~\eqref{eq:proposed_dyn} in MATLAB. 
In both implementations, Agents~1–9 compute both primal and dual variables, 
whereas Agent~10 computes only its primal variable.
Homogenous time-varying communication delays $\tau_k(t)=\tau(t)$ with $\left|\dot{\tau}(t)\right|\leq d=0.1$ are assumed across all channels. 
All optimization problems are solved in MATLAB using Yalmip~\cite{lofberg2004yalmip} with the Mosek~\cite{mosek2019mosek} solver. 

The maximum allowable delay upper bound (MADUB) of the standard PDGD~\eqref{eq:PD_delayed} is computed by solving the conditions of Theorem~3.5 in~\cite{fridman2014introduction} for all vertices via bisection. 
The result shows that the standard PDGD~\eqref{eq:PD_delayed} remains stable for delays up to $\bar{h}=0.372\,\mathrm{s}$ according to Theorem~3.5 in~\cite{fridman2014introduction}.

For the proposed dynamics~\eqref{eq:proposed_dyn}, the MADUB is computed for different choices of the tuning parameter $\varepsilon$ by solving the optimization problem described in Remark~\ref{remark:max_delay} with an additional constraint~\eqref{eq:min_gain}. 
Since LKF-based stability analysis provides only \emph{sufficient conditions} for stability, the resulting delay bounds may be conservative~\cite{fridman2014introduction}. 
In Corollary~\ref{corollary:LMI_formulation}, additional conservatism may arise from restricting \(P_2\) and \(X\) to be block-diagonal matrices and from fixing \(\varepsilon\).
To assess the conservatism of the proposed approach, for each resulting gain matrix \(\mathcal{K}\), we compute the MADUB by using Theorem~3.5 in~\cite{fridman2014introduction}. As the latter conditions are used only for stability analysis, rather than gain synthesis, we do not impose a block-diagonal structure on the decision variables, which may yield less conservative results.

Table~\ref{tab:conservatism_analysis} summarizes the performance analysis. 
The results indicate that for different tuning parameters $\varepsilon$, we obtain a different MADUB by using Remark~\ref{remark:max_delay}. 
We compute the largest MADUB $\bar{h}=1.017\,\mathrm{s}$ with $\varepsilon=1.5$, which is approximately $2.7$ times bigger than the MADUB of the standard PDGD~\eqref{eq:PD_delayed}. 
It can be seen from Table~\ref{tab:conservatism_analysis} that with the gains obtained via Remark~\ref{remark:max_delay}, the proposed dynamics~\eqref{eq:proposed_dyn} is even robust to much larger delays according to Theorem~3.5 in~\cite{fridman2014introduction}. 
Thus, we conclude that overall~\eqref{eq:proposed_dyn} provides significantly improved delay robustness compared to~\eqref{eq:PD_delayed}.

\begin{table}
\caption{Performance Analysis of the MADUBs obtained via the conditions in Remark~\ref{remark:max_delay} and Theorem~3.5 in~\cite{fridman2014introduction}.}
\label{tab:conservatism_analysis}
\centering
\begin{tabular}{c|cc}
\hline
\multirow{2}{*}{\shortstack{Tuning\\parameter $\varepsilon$}} 
& \multicolumn{2}{c}{Maximum allowable delay upper bound ($\bar{h}$)}  \\
& by Remark~\ref{remark:max_delay}& by Theorem~3.5 in~\cite{fridman2014introduction}   \\
\hline
$0.5$ & $0.712\,\mathrm{s}$  & $6.04\,\mathrm{s}$  \\
$1$   & $0.915\,\mathrm{s}$ & $4.72\,\mathrm{s}$ \\
$1.5$ & $1.017\,\mathrm{s}$    & $5.94\,\mathrm{s}$  \\
\hline
\end{tabular}
\end{table}

\section{Conclusion}

In this paper, we proposed a continuous-time augmented PDGD for distributed optimization with equality constraints under heterogeneous, time-varying communication delays. The approach augments the standard PDGD with auxiliary dynamic states coupled to the original dynamics through a gain matrix. An LKF-based analysis yields delay-dependent stability conditions that are reformulated as tractable LMIs for structured gain synthesis. A numerical example demonstrates that the proposed approach significantly improves delay robustness compared to the standard PDGD.

Future work will extend the framework to inequality constraints, rank-deficient equality constraints, and convex cost functions without strong convexity assumptions.



\bibliography{refs_Arxiv}

@article{schiffer2017robustness,
  title={Robustness of distributed averaging control in power systems: Time delays \& dynamic communication topology},
  author={Schiffer, Johannes and D{\"o}rfler, Florian and Fridman, Emilia},
  journal={Automatica},
  volume={80},
  pages={261--271},
  year={2017},
  publisher={Elsevier}
}

@article{schiffer2016stability,
  title={Stability of a class of delayed port-{H}amiltonian systems with application to microgrids with distributed rotational and electronic generation},
  author={Schiffer, Johannes and Fridman, Emilia and Ortega, Romeo and Raisch, J{\"o}rg},
  journal={Automatica},
  volume={74},
  pages={71--79},
  year={2016},
  publisher={Elsevier}
}

@article{qu2018exponential,
  title={On the exponential stability of primal-dual gradient dynamics},
  author={Qu, Guannan and Li, Na},
  journal={IEEE Control Syst. Lett.},
  volume={3},
  number={1},
  pages={43--48},
  year={2018},
  publisher={IEEE}
}

@article{fridman2014introduction,
  title={Introduction to time-delay systems},
  author={Fridman, Emilia},
  journal={Analysis and Control. Birkh{\"a}user},
  volume={75},
  year={2014},
  publisher={Springer}
}

@article{yang2016distributed,
  title={Distributed optimization based on a multiagent system in the presence of communication delays},
  author={Yang, Shaofu and Liu, Qingshan and Wang, Jun},
  journal={IEEE Trans. Syst., Man, Cybern., Syst.},
  volume={47},
  number={5},
  pages={717--728},
  year={2016},
  publisher={IEEE}
}

@article{wang2018distributed,
  title={Distributed optimization for multi-agent systems with constraints set and communication time-delay over a directed graph},
  author={Wang, Dong and Wang, Zhu and Chen, Mingfei and Wang, Wei},
  journal={Inf. Sci.},
  volume={438},
  pages={1--14},
  year={2018},
  publisher={Elsevier}
}

@article{wang2019distributed,
  title={Distributed optimization for resource allocation problems under large delays},
  author={Wang, Xue-Fang and Hong, Yiguang and Sun, Xi-Ming and Liu, Kun-Zhi},
  journal={IEEE Trans. Ind. Electron.},
  volume={66},
  number={12},
  pages={9448--9457},
  year={2019},
  publisher={IEEE}
}

@article{he2004parameter,
  title={Parameter-dependent {Lyapunov} functional for stability of time-delay systems with polytopic-type uncertainties},
  author={He, Yong and Wu, Min and She, Jin-Hua and Liu, Guo-Ping},
  journal={IEEE Trans. Autom. Control},
  volume={49},
  number={5},
  pages={828--832},
  year={2004},
  publisher={IEEE}
}

@article{yang2019survey,
  title={A survey of distributed optimization},
  author={Yang, Tao and Yi, Xinlei and Wu, Junfeng and Yuan, Ye and Wu, Di and Meng, Ziyang and Hong, Yiguang and Wang, Hong and Lin, Zongli and Johansson, Karl H},
  journal={Annual Reviews in Control},
  volume={47},
  pages={278--305},
  year={2019},
  publisher={Elsevier}
}

@article{nedic2018distributed,
  title={Distributed optimization for control},
  author={Nedi{\'c}, Angelia and Liu, Ji},
  journal={Annu. Rev. Control Robot. Auton. Syst.},
  volume={1},
  number={1},
  pages={77--103},
  year={2018},
  publisher={Annual Reviews}
}

@article{liu2019distributed,
  title={A distributed optimization algorithm based on multiagent network for economic dispatch with region partitioning},
  author={Liu, Qingshan and Le, Xinyi and Li, Kaixuan},
  journal={IEEE Trans. Cybern.},
  volume={51},
  number={5},
  pages={2466--2475},
  year={2019},
  publisher={IEEE}
}

@article{dall2013distributed,
  title={Distributed optimal power flow for smart microgrids},
  author={Dall'Anese, Emiliano and Zhu, Hao and Giannakis, Georgios B},
  journal={IEEE Transactions on Smart Grid},
  volume={4},
  number={3},
  pages={1464--1475},
  year={2013},
  publisher={IEEE}
}

@article{xu2017distributed,
  title={A distributed algorithm for resource allocation over dynamic digraphs},
  author={Xu, Yun and Han, Tingrui and Cai, Kai and Lin, Zhiyun and Yan, Gangfeng and Fu, Minyue},
  journal={IEEE Transactions on Signal Processing},
  volume={65},
  number={10},
  pages={2600--2612},
  year={2017},
  publisher={IEEE}
}

@article{kose1956solutions,
  title={Solutions of saddle value problems by differential equations},
  author={Kose, Tairoku},
  journal={Econometrica},
  pages={59--70},
  year={1956},
  publisher={JSTOR}
}

@book{arrow1958studies,
  title={Studies in linear and non-linear programming},
  author={Arrow, Kenneth Joseph and Hurwicz, Leonid and Uzawa, Hirofumi and Chenery, Hollis Burnley and Johnson, Selmer and Karlin, Samuel},
  volume={2},
  year={1958},
  publisher={Stanford University Press Stanford}
}

@article{cherukuri2016asymptotic,
  title={Asymptotic convergence of constrained primal-dual dynamics},
  author={Cherukuri, Ashish and Mallada, Enrique and Cort{\'e}s, Jorge},
  journal={Syst. Control Lett.},
  volume={87},
  pages={10--15},
  year={2016},
  publisher={Elsevier}
}

@article{feijer2010stability,
  title={Stability of primal--dual gradient dynamics and applications to network optimization},
  author={Feijer, Diego and Paganini, Fernando},
  journal={Automatica},
  volume={46},
  number={12},
  pages={1974--1981},
  year={2010},
  publisher={Elsevier}
}

@article{guo2022exponential,
  title={Exponential convergence of primal-dual dynamics under general conditions and its application to distributed optimization},
  author={Guo, Luyao and Shi, Xinli and Cao, Jinde and Wang, Zihao},
  journal={IEEE Trans. Neural Netw. Learn. Syst.},
  volume={35},
  number={4},
  pages={5551--5565},
  year={2022},
  publisher={IEEE}
}

@article{li2020smooth,
  title={Smooth dynamics for distributed constrained optimization with heterogeneous delays},
  author={Li, Mengmou and Yamashita, Shunya and Hatanaka, Takeshi and Chesi, Graziano},
  journal={IEEE Control Syst. Lett.},
  volume={4},
  number={3},
  pages={626--631},
  year={2020},
  publisher={IEEE}
}

@book{boyd2004convex,
  title={Convex optimization},
  author={Boyd, Stephen and Vandenberghe, Lieven},
  year={2004},
  publisher={Cambridge university press}
}

@inproceedings{wang2011control,
  title={A control perspective for centralized and distributed convex optimization},
  author={Wang, Jing and Elia, Nicola},
  booktitle={2011 50th IEEE Conf. Decis. Control Eur. Control Conf. (CDC–ECC)},
  pages={3800--3805},
  year={2011},
  organization={IEEE}
}

@book{boyd1994linear,
  title={Linear matrix inequalities in system and control theory},
  author={Boyd, Stephen and El Ghaoui, Laurent and Feron, Eric and Balakrishnan, Venkataramanan},
  year={1994},
  publisher={SIAM}
}

@article{chang2014distributed,
  title={Distributed constrained optimization by consensus-based primal-dual perturbation method},
  author={Chang, Tsung-Hui and Nedi{\'c}, Angelia and Scaglione, Anna},
  journal={IEEE Trans. Autom. Control},
  volume={59},
  number={6},
  pages={1524--1538},
  year={2014},
  publisher={IEEE}
}

@inproceedings{lofberg2004yalmip,
  title={{YALMIP}: A toolbox for modeling and optimization in {MATLAB}},
  author={Lofberg, Johan},
  booktitle={Proc. IEEE Int. Conf. Robot. Autom. (ICRA), 2004.},
  pages={284--289},
  year={2004},
  organization={IEEE}
}

@article{mosek2019mosek,
  title={MOSEK optimization toolbox for {MATLAB}},
  author={Mosek, ApS},
  journal={User’s guide and reference manual, version},
  volume={4},
  number={1},
  pages={116},
  year={2019}
}

@article{jakovetic2020primal,
  title={Primal-dual methods for large-scale and distributed convex optimization and data analytics},
  author={Jakoveti{\'c}, Du{\v{s}}an and Bajovi{\'c}, Dragana and Xavier, Jo{\~a}o and Moura, Jos{\'e} MF},
  journal={Proc. of the IEEE},
  volume={108},
  number={11},
  pages={1923--1938},
  year={2020},
  publisher={IEEE}
}

@Book{FB-LNS,
  author =    {F. Bullo},
  title =     {Lectures on Network Systems},
  year =      2024,
  edition =   {{1.7}},
  publisher = {Kindle Direct Publishing},
  ISBN =      {978-1986425643},
  url =       {https://fbullo.github.io/lns},
}
 \bibliographystyle{ieeetr}
\appendix
\subsection{Proof of Proposition~\ref{prop:stability_cond}} \label{app:proof_1}
Inspired by Proposition 5.3 in~\cite{fridman2014introduction}, consider the LKF 
\begin{subequations}\label{eq:LKF}
\begin{align}
    V= & V_1 +\sum_{k=1}^\rho \left(V_{2,k} + V_{3,k} + V_{4,k}\right), \\
    V_{1} =& \left[\begin{array}{c}
         \tilde{z} \\
         \tilde{u} 
    \end{array} \right]^\top \left[\begin{array}{cc}
         {Y}_{11} & {Y}_{12} \\
         \ast & {Y}_{22}
    \end{array} \right] \left[\begin{array}{c}
         \tilde{z} \\
         \tilde{u} 
    \end{array} \right], \\
    V_{2,k} =& \int_{t-h_{k}}^t \tilde{z}^\top(s) {S}_{k}\tilde{z}(s)ds, \\
    V_{3,k} =&\int_{t-\tau_{k}}^t \tilde{z}^\top(s) {Q}_{k}\tilde{z}(s) ds, \\
    V_{4,k} =&h_k \int^{0}_{-h_{k}}\int^t_{t+\theta}\dot{\tilde{z}}^\top(s) {R}_{k}\dot{\tilde{z}}(s) dsd\theta.
\end{align}
\end{subequations}
The derivative of the LKF~\eqref{eq:LKF} along the solutions of the system~\eqref{eq:compact_shifted} yields
\begin{subequations}\label{eq:derivative_V}
\allowdisplaybreaks
\begin{align}
   \dot{V} =& \dot{V}_{1}+\sum_{k=1}^\rho \left(\dot{V}_{2,k} + \dot{V}_{3,k} + \dot{V}_{4,k}\right), \label{eq:V_dot} \\
        \dot{V}_{1} =& 2\tilde{z}^\top {Y}_{11}\dot{\tilde{z}} +2\tilde{u}^\top {Y}_{12}^\top\dot{\tilde{z}} -2 \tilde{z}^\top {Y}_{12}\tilde{u} \\&+2\tilde{z}^\top Y_{12}\tilde{z} 
    -2\tilde{u}^\top {Y}_{22}\tilde{u} +2\tilde{u}^\top {Y}_{22}\tilde{z}, \notag \\
    \dot{V}_{2,k} =& \tilde{z}^\top {S}_{k} \tilde{z} - \tilde{z}_{h_{k}}^\top{S}_{k}\tilde{z}_{h_{k}}, \\ 
    \dot{V}_{3,k} = &  \tilde{z}^\top {Q}_{k} \tilde{z} - (1-\dot{\tau}_{k})\tilde{z}_{\tau_{k}}^\top{Q}_{k}\tilde{z}_{\tau_{k}} \notag\\
    \leq & \tilde{z}^\top {Q}_{k} \tilde{z} -(1-d_{k}) \tilde{z}_{\tau_{k}}^\top{Q}_{k}\tilde{z}_{\tau_{k}},  \\ 
    \dot{V}_{4,k} = & h_{k}^2\dot{\tilde{z}}^\top {R}_{k} \dot{\tilde{z}} - h_{k}\int_{t-h_{k}}^{t}\dot{\tilde{z}}^\top(s){R}_{k}\dot{\tilde{z}}(s)ds \label{eq:dot_V4}.
\end{align}
\end{subequations}
For the second term in~\eqref{eq:dot_V4}, we apply Jensen's inequality together with Lemma 3.4 in \cite{fridman2014introduction}, which yields
\begin{align}
   & - h_{k}\int_{t-h_{k}}^{t}\dot{\tilde{z}}^\top(s){R}_{k}\dot{\tilde{z}}(s)ds \nonumber \\
   &= - h_{k}\int_{t-h_{k}}^{t-\tau_{k}}\dot{\tilde{z}}^\top(s){R}_{k}\dot{\tilde{z}}(s)ds - h_{k}\int_{t-\tau_{k}}^{t}\dot{z}^\top(s){R}_{k}\dot{\tilde{z}}(s)ds \nonumber \\
   &\leq -\begin{bmatrix}
       \tilde{z} - \tilde{z}_{\tau_{k}} \\
        \tilde{z}_{\tau_{k}} - \tilde{z}_{h_{k}}
   \end{bmatrix}^\top \left[\begin{array}{cc}
      {R}_{k}  & {S}_{12,{k}}  \\
       \ast & {R}_{k}
   \end{array} \right] \begin{bmatrix}
       \tilde{z} - \tilde{z}_{\tau_{k}} \\
        \tilde{z}_{\tau_{k}} - \tilde{z}_{h_{k}}
   \end{bmatrix}. \notag
\end{align}
Next, we apply the descriptor method (see, e.g., \cite{fridman2014introduction}), i.e., we add the following term to the right hand side of~\eqref{eq:V_dot} 
\begin{equation}\label{eq:descriptor_method}
    \begin{aligned}
        0\! = \! 2\big(\tilde{z}^\top P_2^\top \! \!+\!\dot{\tilde{z}}^\top P_3^\top  \big)\! \big(\!
-\!\dot{\tilde{z}} \!+\! (\mathcal{A}(\hat{x})\! + \mathcal{K})\tilde{z}
\!+\!  \sum_{k=1}^{\rho} \mathcal{T}_k \tilde{z}_{\tau_{k}}
\!-\! \mathcal{K}\tilde{u}
\big),
    \end{aligned}
\end{equation}
where $P_2, P_3\in \mathbb{R}^{r\times r}$ are auxiliary matrices.
The descriptor method introduces terms with $P_2^\top \mathcal{K}$ and $P_3^\top \mathcal{K}$. To synthesize $\mathcal{K}$ with a block structure, we set $P_3 = \varepsilon P_2$, where $\varepsilon$ is a tuning scalar and define the auxiliary variable $X = P_2^\top \mathcal{K}$.
Then, we get 
$\dot{V} \leq \xi^\top {\Phi}(x) \xi,$
where 
$\xi = \col(
\tilde u,\tilde z,\dot{\tilde z},
\tilde z_{h_{1}},\ldots,\tilde z_{h_{\rho}},
\tilde z_{\tau_{1}},\ldots,\tilde z_{\tau_{\rho}}),$
and $\Phi(x)$ is defined in~\eqref{eq:phi_matrix}. Since by assumption \eqref{eq:jensen_const} and $\Phi(x)<0$ are satisfied, we have $\dot{V} \leq -\alpha \|\col(\tilde{z}, \tilde{u})\|^2$ for some $\alpha>0$,
which proves uniform asymptotic stability of the origin of the system~\eqref{eq:compact_shifted}~\cite{fridman2014introduction}. Moreover, since $P_{2}+P_{2}^\top>0$, the gain matrix can be recovered as $\mathcal{K}=(P_2^\top)^{-1}X$.
\subsection{Proof of Corollary~\ref{corollary:LMI_formulation}}\label{app:proof_corollary}
The state-independent local matrices $A_i$ for $i\in\mathcal{N}_1$, $A^0_i$ for $i\in\mathcal{N}_3$ and the state-dependent local matrices $A_i(\hat{x}_i)$ for $i\in\mathcal{N}_2$ can be written as a global block diagonal matrix by using the polytopic structure such that $\tilde{\mathcal{A}}(\hat{x})=\sum_{j\in \mathcal{J}}\left(\prod_{i\in\mathcal N_2} \delta_{j_i,i}(\hat{x}_i)\right)\tilde{\mathcal{A}}^{(j)}$. 
Now, we focus on the term $A^x_i(\hat{x}_i)$ in Case 3. The state-dependent terms arise through~\eqref{eq:descriptor_method} in Appendix~\ref{app:proof_1} with $P_3=\varepsilon P_2$, and must be bounded for all $i \in \mathcal{N}_3$. 
Using the bounds in Lemma~\ref{eq:lemma1} and applying Young's inequality~\cite{boyd2004convex} 
yields
\begin{align}
    -2& \tilde{z}_i^\top P_{2i}^\top A^x_i(\hat{x}_i) \tilde{z}_i -2 \dot{\tilde{z}}_i^\top \varepsilon P_{2i}^\top A^x_i(\hat{x}_i) \tilde{z}_i \notag\\
    \leq& \tilde{z}_i^\top \Psi_{1i} \tilde{z}_i +\dot{\tilde{z}}_i^\top \Psi_{2i} \dot{\tilde{z}}_i + \ell_i^2 \tilde{x}_i^\top W_{12,i} \Omega_{1i}^{-1}  W_{12,i}^\top \tilde{x}_i  \label{eq:youngs_ineq}\\
    & +\varepsilon^{2} \ell_i^2 w_{11,i}^2 \tilde{x}_i^\top  \Omega_{2i}^{-1} \tilde{x}_i +  \varepsilon^2 \ell_i^2 \tilde{x}_i^\top  W_{12,i} \Omega_{3i}^{-1}  W_{12,i}^\top \tilde{x}_i. \notag
\end{align}
Instead of $\mathcal{A}(\hat{x})$ in~\eqref{eq:phi_matrix}, by writing $\tilde{\mathcal{A}}(\hat{x})$, $\Psi_1$, $\Psi_2$ as defined in~\eqref{eq:Xi_vertex} and applying the Schur complement~\cite{boyd1994linear} to the last three terms in~\eqref{eq:youngs_ineq}, the matrix inequality
$\Phi(\hat{x})<0$ in~\eqref{eq:phi_matrix} can be rewritten
as $\Xi(\hat{x})=\sum_{j\in \mathcal{J}}\left(\prod_{i\in\mathcal N_2} \delta_{j_i,i}(\hat{x}_i)\right)\Xi^{(j)}<0$.
Since the coefficients $\delta_{j_i,i}(\hat{x}_i)$ are state dependent,
feasibility with a common set of decision variables
guarantees that the stability conditions need only to be verified
at the vertices ($\Xi^{(j)}$) of the polytope
\cite{schiffer2016stability, he2004parameter}. Uniform asymptotic stability of the origin then follows by repeating the arguments from the proof of Proposition~\ref{prop:stability_cond}.

\end{document}